\newcolumntype{L}[1]{>{\raggedright\arraybackslash}p{#1}}
\newcolumntype{C}[1]{>{\centering\arraybackslash}m{#1}}
\newcolumntype{R}[1]{>{\raggedleft\arraybackslash}p{#1}}
\newtheorem{theorem}{Theorem}[section]
\newtheorem{lemma}[theorem]{Lemma}
\newtheorem{definition}[theorem]{Definition}
\newtheorem{proposition}[theorem]{Proposition}
\renewcommand{\Pr}{P}
\newcommand{\Ext}{\mathrm{Ext}}
\newcommand{\rbs}{\mathrm{bs}}
\newcommand{\rmmneq}{\mathrm{neq}}
\newcommand{\rmmeq}{\mathrm{eq}}
\newcommand{\qstate}{\mathcal{E}}
\newcommand{\gqstate}{\mathcal{E}}
\newcommand{\bra}[1]{\mbox{$\left\langle #1 \right|$}}
\newcommand{\ket}[1]{\mbox{$\left| #1 \right\rangle$}}
\begin{document}
\preprint{APS/123-QED}
\title{Real-Time Seedless Post-Processing for Quantum Random Number Generators}

\begin{abstract}
Quantum-proof randomness extraction is essential for handling quantum side information possessed by a quantum adversary, which is widely applied in various quantum cryptography tasks. In this study, we introduce a real-time two-source quantum randomness extractor against quantum side information.
Our extractor is tailored for forward block sources, a novel category of min-entropy sources introduced in this work. These sources retain the flexibility to accommodate a broad range of quantum random number generators. Our online algorithms demonstrate the extraction of a constant fraction of min-entropy from two infinitely long independent forward block sources.
Moreover, our extractor is inherently block-wise parallelizable, presenting a practical and efficient solution for the timely extraction of high-quality randomness. Applying our extractors to the raw data of one of the most commonly used quantum random number generators, we achieve a simulated extraction speed as high as 64 Gbps.

Keywords: Quantum random number generators; quantum adversary; online two-source extractors; constant extraction rate; infinitely long sources.
\end{abstract}

\author{Qian Li}
\affiliation{Shenzhen lnternational Center For Industrial  And  Applied Mathematics, Shenzhen Research Institute of Big Data, Shenzhen, China.}

\author{Hongyi Zhou}
\email{zhouhongyi@ict.ac.cn}
\affiliation{State Key Lab of Processors, Institute of Computing Technology, Chinese Academy of Sciences, 100190, Beijing, China.}

\maketitle
\section{introduction}
Randomness extraction is a fundamental concept in information theory and cryptography \cite{impagliazzo1989recycle}, playing a key role in various applications, from secure communication to cryptographic key generation. Traditionally, randomness extractors have been designed to efficiently distill uniform randomness from weak-randomness sources against classical side information. However, in the era of quantum technology, the threat posed by quantum adversaries has attracted lots of attention. A quantum adversary has a quantum memory that shares quantum correlations with the randomness source. The information can be extracted by performing various measurements on the quantum memory, which is called quantum side information. An extractor dealing with classical side information may fail in the case of quantum side information \cite{gavinsky2007exponential}. Therefore, the need for quantum-proof randomness extractors that are able to provide information-theoretic security in the presence of a quantum adversary \cite{konig2011sampling} has emerged as a critical concern.

It has been shown that no deterministic extractor can extract even one bit of randomness from one general weak-randomness source with large min entropy, even not from a Santha-Vazirani source \cite{sv1} that is more structural. Therefore, two main approaches to extracting randomness from weak-randomness sources have been explored: seeded extractors and two-source extractors. Seeded extractors rely on a relatively short sequence of random numbers called random seed, which may not be uniform \cite{de2012trevisan} but should be truly random. During the extraction procedure, the seed must be periodically updated to maintain randomness and security, even for strong extractors where the seed can be reused \cite{nisan1996randomness}. Managing and updating the seed securely can be a logistical challenge, especially in long-term deployments or scenarios where the system may be physically tampered with. 
On the other hand, two-source extractors leverage two independent weak-randomness sources to extract high-quality random bits, which avoids the issues above. 

Substantial researches have been dedicated to two-source extractors. The first explicit construction of a two-source extractor appeared in \cite{CG88}, where the authors proposed that randomness can be extracted from two independent sources with an entropy rate of at least $1/2$ based on the Lindsey’s lemma. The following researches mainly aim at relaxing the entropy rate requirement \cite{raz2005extractors,Bou05,barak2006extracting,LiXin13,Li15,Cha16}. In 2016, Chattopadhyay and Zuckerman \cite{Cha16} provided an explicit two-source extractor that works for two independent sources of $n$ bits with an entropy rate of $\log^C n/n$ for a large enough constant $C$, which is nearly optimal. 
The two-source extractor against quantum adversaries was first considered in \cite{multi-source2}, which shows the two-source extractor proposed in \cite{DodisEOR04} is still secure if the quantum side information about the two sources is in the product form. 
This result was further improved in \cite{chung2014multi,FPS16} for more general quantum adversaries.

The existing two-source extractors against quantum adversary either output just one bit or suffer from limitations on the real-time applicability, i.e., the extraction procedure cannot begin until all raw data are generated. The demand for such real-time extractors arises from the fact that the generation rate of random bits is constrained by the post-processing speed \cite{3GQRNG,bai202118}, especially for trusted-device quantum random number generators \cite{ma2016quantum}, which can hinder the timely extraction of high-quality randomness for applications like quantum key distribution.
In response to this challenge, there is a pressing demand for the development of online algorithms capable of extracting randomness in real time. Besides, even worse, we are not aware of any existing quantum-proof two-source extractor that can extract a constant fraction of min-entropy from the two independent sources.

In this research, we propose two \emph{real-time} two-source quantum-proof randomness extractors. In contrast with conventional quantum-proof randomness extractors characterizing the input raw data as min-entropy sources, we find that a large class of quantum random number generators, including one of the most commonly used trusted-device quantum random number generators \cite{gehring2021homodyne}, can be characterized by the so-called \emph{forward block sources} \cite{li2023improved}. This insight allows us to develop more efficient online extraction algorithms. 
Both of our extractors process input raw data on-the-fly, partitioning it into blocks sequentially based on the time order of arrival: they employ the inner product function in specific finite fields on each block. In our first extractor, all blocks share a uniform length that is logarithmic in the overall input raw data length. This property makes the first extractor hardware-friendly and achieve a high extraction speed. While our second extractor utilizes incremental block lengths, rendering it capable of handling infinitely long input raw data. Both extractors can extract a constant fraction of min-entropy from the raw data. Additionally, the independence of applying the inner product function to each block facilitates natural parallel implementation of our extractors.


\section{Preliminaries}
Throughout the paper, we use capital and lowercase letters (e.g., $X$ and $x$) to represent random variables and their assignments, respectively.
We use $U_m$ to represent the perfectly uniform random variable on $m$-bit strings and $\rho_{U_m}$ to represent the $m$-dimensional maximally mixed state. The terms ``quantum adversary” and ``quantum side information" will be used interchangeably.

\begin{definition}[Conditional min-entropy]
Let $Y$ be a classical random variable that takes value $y$ with probability $\Pr_y$ and $\gqstate$ be a quantum system. The state of the composite system can be written as $\rho_{Y\gqstate}=\sum_y \Pr_y\ket{y}\bra{y}\otimes \rho_{\gqstate}^y$, where $\{\ket{y}\}_y$ forms an orthonormal basis. The conditional min-entropy of $Y$ given $\gqstate$ is $H_{\min}(Y|\gqstate)_{\rho_{Y\gqstate}}=-\log_2 p_{\mathrm{guess}}(Y|\gqstate)$, where $p_{\mathrm{guess}}(Y|\gqstate)$ is the maximum average probability of guessing $Y$ given the quantum system $\gqstate$. That is,
\begin{equation}
p_{\mathrm{guess}} (Y|\gqstate)=\max_{\{E_{\gqstate}^y\}_y}\left[\sum_y\Pr_y\mathrm{Tr}\left(E_{\gqstate}^y\rho_{\gqstate}^y\right)\right],
\end{equation}
where the maximization is taken over all positive operator-valued measures (POVMs) $\{E_{\gqstate}^y\}_y$ on $\gqstate$. When $\rho_{Y\gqstate}$ is clear from the context, we will denote the conditional min-entropy as $H_{\min}(Y|\gqstate)$ for brevity.
\end{definition}

In this paper, we call the raw data generated by a QRNG a random source. A general random source is the min-entropy source, where the conditional min-entropy is lower bounded. We consider how to extract randomness from two sequences of raw data generated by two separated QRNGs. Here, we assume that the two separated QRNGs are in product, then the quantum side information is also in the product form. This scenario corresponds to two independent quantum adversaries, each tampering with one of the sources, and then jointly trying to guess the extractor's output. Formally, we have the following definition.
\begin{definition}[Product quantum side information \cite{multi-source2}]
Let $X$ and $Y$ be two independent sources and $\gqstate$ be the quantum side information. We say that the quantum side information is product if 
the joint state $\rho_{XY\gqstate}=\rho_{X\gqstate_1}\otimes\rho_{Y\gqstate_2}$. Note that then $H_{\min}(X| \gqstate)=H_{\min}(X| \gqstate_1)$ and $H_{\min}(Y|\gqstate)=H_{\min}(Y|\gqstate_2)$.
.

\end{definition}

\begin{definition}[Min-entropy quantum-proof two-source extractor]
A function $\Ext:\{0,1\}^{t}\times \{0,1\}^{t}\rightarrow \{0,1\}^m$ is a $(t,k,\epsilon)$ min-entropy quantum-proof two-source extractor, if for every sources $(X,Y)$ and product quantum side information $\gqstate$ where 
$H_{\min}(X| \gqstate)\geq k$ and $H_{\min}(Y| \gqstate)\geq k$, we have
\begin{equation}\label{eq:criteria_ext}
\frac{1}{2}\|\rho_{\Ext(X,Y)\gqstate} - \rho_{U_m}\otimes \rho_{\gqstate} \| \leq \epsilon,
\end{equation}
where $\|\cdot\|$ denotes the trace norm defined by $\|A\| = \mathrm{Tr}\sqrt{A^\dag A}$.
\end{definition}

\section{Main result}

We use $X=X_1X_2\cdots X_N\in (\{0,1\}^b)^N$ and $Y=Y_1Y_2\cdots Y_N\in (\{0,1\}^b)^N$ to denote the raw data generated by two QRNGs, both consisting of $N$ samples each of $b$ bits. For a set $I\subset \mathbb{N}^+$, we write $X_I$ for the restriction of $X$ to the samples determined by $I$. For example, if $I=\{1,3,7\}$, then $X_I=X_1X_3X_7$. We use $\qstate$ to denote the quantum system possessed by the quantum adversary.


\subsection{Model the raw data as forward block sources}
In contrast with conventional quantum-proof randomness extractors characterizing the input raw data as min-entropy sources, we model the input raw data as \emph{forward block sources}, defined as follows.
\begin{definition}[Forward block source]\label{def:rbs}
	A string of random variables $X=X_1\cdots X_N\in(\{0,1\}^b)^N$ is a $(b,N,\delta)$-forward block source given a quantum system $\qstate$ if
for every $1\leq k\leq i\leq N$ and every $x_{1},x_{2},\cdots, x_{k-1}$,
\small{
\begin{equation}\label{eq:rrrbs_def}
\begin{aligned}
H_{\min}(X_k,X_{k+1},\cdots,X_{i}| X_{1}
=x_{1},\cdots,X_{k-1}=x_{k-1},\qstate)
\geq  (i-k+1)\delta\cdot b.
\end{aligned}
\end{equation}
}
\end{definition}
Intuitively, Eq.~\eqref{eq:rrrbs_def} implies each sample $X_i$ brings at least $\delta b$ new randomness. The notion of forward block sources is more special than the min-entropy sources, so it is possible to design extractors with better performance. Fortunately, the notion of forward block sources is still quite general so that it can characterize the input raw data generated by a large class of QRNGs. When the independent and identically distributed (i.i.d.) assumption holds, the raw data straightforwardly satisfies the forward block source. For example, the trusted-devices QRNGs based on single photon detection \cite{Rarity94, Stefanov00, Jennewein00}, vacuum fluctuations \cite{Gabriel10,Symul11,Jofre11}, and photon arrival time \cite{Wahl11,Li13,Nie14} can generate independent raw data satisfying the forward block source. Besides, the raw data of the (semi-)device-independent QRNG under the independent and identically distributed (i.i.d.) assumption also satisfies the forward block source if non-zero min-entropy lower bound is calculated in test rounds.
Moreover, for QRNGs with correlated raw data, one can construct appropriate physical models to check whether Eq.~\eqref{eq:rrrbs_def} is satisfied. For example, 
the ones based on homodyne detection \cite{gehring2021homodyne} with finite bandwidth can both be shown to be a forward block source \cite{li2023improved}. Finally, we remark that there does not exist any nontrivial deterministic extractor for a single block source.



\subsection{Gadget: a construction of min-entropy quantum-proof two-source extractor}

In this subsection, we provide a construction of min-entropy quantum-proof two-source extractor (see Definition \ref{def:IP}), which will be used as a gadget. 
Let $\mathbb{F}_{2^q}$ denote the finite field on $2^q$ elements. We identify $\mathbb{F}_{2^q}$ with $\{0,1\}^q$ and $\mathbb{F}_{2^q}^n$ with $(\{0,1\}^{q})^n$ where the addition operation $``+_q"$ in the field $\mathbb{F}_{2^q}$ is the bit-wise parity operation in $\{0,1\}^{q}$.  


\begin{definition}[Inner-product extractor]\label{def:IP}
The $(n,q)$ inner-product extractor $\Ext_{IP}^{n,q}:(\{0,1\}^{q})^n\times (\{0,1\}^{q})^n\rightarrow \{0,1\}^q$ is defined as
$\Ext_{IP}^{n,q}(x,y)=\sum_{i=1}^n x_i y_i$. Here, $x=x_1\cdots x_n\in(\{0,1\}^q)^n$ and $y=y_1\cdots y_n\in(\{0,1\}^q)^n$, and the addition operations and multiplication operations are in the field $\mathbb{F}_{2^q}$. 

\end{definition}
\begin{theorem}\label{thm:garget}
$\Ext^{n,q}_{IP}$ is a $(qn$,$\delta qn$,$\sqrt{3}\cdot 2^{-\frac{1}{4}-\left(\frac{\delta}{4}-\frac{1}{8}\right)qn+2q})$ min-entropy quantum-proof two-source extractor.
\end{theorem}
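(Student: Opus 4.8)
The plan is to reduce the trace-distance criterion of Eq.~\eqref{eq:criteria_ext} to a single operator-norm estimate on the inner-product character matrix, via a Fourier (quantum XOR) decomposition over the additive characters of the output field together with the product structure $\rho_{XY\qstate}=\rho_{X\qstate_1}\otimes\rho_{Y\qstate_2}$ and the bilinearity of the inner product. Writing $Z=\Ext_{IP}^{n,q}(X,Y)$ and identifying $\mathbb{F}_{2^q}$ with $\{0,1\}^q$, I would introduce for each frequency $s\in\{0,1\}^q$ the bias operator
\[
\rho^{(s)}=\sum_{x,y}\Pr(x)\Pr(y)\,(-1)^{s\cdot\langle x,y\rangle}\,\rho_{\qstate_1}^{x}\otimes\rho_{\qstate_2}^{y},
\]
acting only on the adversary's system, so that $\rho_{Z\qstate}=\sum_z|z\rangle\langle z|\otimes 2^{-q}\sum_s(-1)^{s\cdot z}\rho^{(s)}$ and the $s=0$ term reproduces exactly $\rho_{U_q}\otimes\rho_\qstate$. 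Character orthogonality over $z$ then expresses the entire deviation through the nonzero-frequency operators $\{\rho^{(s)}\}_{s\neq 0}$, reducing the goal to bounding their size.

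To keep the final estimate independent of the (unbounded) dimension of the adversary's memory, I would measure this size in a smoothed Hilbert--Schmidt norm, sandwiching by $\rho_\qstate^{-1/4}$ in the manner of the conditional-collision-entropy technique for quantum-proof extractors \cite{multi-source2}; the trace norm is controlled by $\sqrt{2^q}$ times the smoothed $2$-norm, turning the task into estimating $\sum_{s\neq 0}\mathrm{Tr}\!\big[(\rho_\qstate^{-1/4}\rho^{(s)}\rho_\qstate^{-1/4})^2\big]$. Using $\rho_\qstate^{-1/4}=\rho_{\qstate_1}^{-1/4}\otimes\rho_{\qstate_2}^{-1/4}$ and expanding the square, each term factors into the Gram matrices $\tilde G^{X}_{x,x'}=\mathrm{Tr}[A_xA_{x'}]$ and $\tilde G^{Y}_{y,y'}=\mathrm{Tr}[B_yB_{y'}]$ of the smoothed conditional operators $A_x=\Pr(x)\rho_{\qstate_1}^{-1/4}\rho_{\qstate_1}^{x}\rho_{\qstate_1}^{-1/4}$ and $B_y=\Pr(y)\rho_{\qstate_2}^{-1/4}\rho_{\qstate_2}^{y}\rho_{\qstate_2}^{-1/4}$, coupled only through the inner-product character matrix, giving the clean bilinear form $\mathrm{Tr}\big[(M^{(s)})^{\top}\tilde G^{X}M^{(s)}\tilde G^{Y}\big]$ with $M^{(s)}_{x,y}=(-1)^{s\cdot\langle x,y\rangle}$.

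The structural heart is Lindsey's lemma: by orthogonality of characters one has $(M^{(s)})^{\top}M^{(s)}=2^{qn}I$ for every $s\neq 0$, so $M^{(s)}$ equals $2^{qn/2}$ times an orthogonal matrix and conjugation by it preserves the Hilbert--Schmidt norm. This isolates the factor $2^{qn}$ and leaves the two source contributions $\mathrm{Tr}[(\tilde G^{X})^2]$ and $\mathrm{Tr}[(\tilde G^{Y})^2]$, each a conditional collision quantity that the hypothesis bounds through $H_2(\cdot|\qstate_i)\ge H_{\min}(\cdot|\qstate_i)\ge\delta qn$. Combining the Lindsey factor $2^{qn}$ with the two decaying source factors produces an exponentially small bound whose rate is governed by the ``collision advantage'' $(2\delta-1)qn$, negative exactly when the entropy rate $\delta$ exceeds the threshold $1/2$ familiar from the classical inner-product extractor; carrying this through the character sum (the prefactor $2^{2q}$) and the successive Cauchy--Schwarz steps yields a bound of the claimed form $\sqrt{3}\cdot 2^{-1/4-(\delta/4-1/8)qn+2q}$, with the quarter-powers traceable to the $\rho_\qstate^{-1/4}$ smoothing and the trace-norm-to-$2$-norm passage.

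The main obstacle is precisely this quantum-proof reduction. Because the bilinear phase $(-1)^{s\cdot\langle x,y\rangle}$ entangles the two source indices, the smoothed $2$-norm does not factor across the registers a priori and must be massaged into the bilinear trace $\mathrm{Tr}[(M^{(s)})^{\top}\tilde G^{X}M^{(s)}\tilde G^{Y}]$ before Lindsey's bound can be applied; crucially, this manipulation has to retain the quantum conditioning encoded in the operators $\rho_{\qstate_i}^{x}$ rather than first measuring the side information and collapsing to the classical analysis, which is exactly what makes the statement quantum-proof. Tracking the $\rho_\qstate^{-1/4}$ smoothing so that the estimate stays dimension-free, correctly relating the smoothed Gram traces to the conditional min-entropy, and accounting for the constant $\sqrt{3}$ and the precise prefactor $2^{2q}$ across the Cauchy--Schwarz steps are the delicate points; once the operator identity and the $2^{qn/2}$ operator-norm bound are established, the rest is routine bookkeeping.
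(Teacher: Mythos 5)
Your proposal is correct in its essentials but takes a genuinely different route from the paper. The paper's proof is modular and almost entirely black-box: it quotes the classical-quantum XOR lemma of \cite{multi-source2} (Lemma~\ref{lem:xor}) to reduce the $q$-bit output to the single bits $S\cdot Z$, observes that each such bit equals the first bit of $a\times \Ext_{IP}^{n,q}(x,y)$ for some nonzero $a\in\mathbb{F}_{2^q}$, verifies that this map is a Hadamard function (Proposition~\ref{prop:ip}), invokes the classical Chor--Goldreich/Lindsey bound (Lemma~\ref{thm:cg88}) to obtain a classical one-bit two-source extractor, and finally lifts it to a quantum-proof one via Corollary~27 of \cite{multi-source2} (Lemma~\ref{lem:classical2quantum}). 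You, by contrast, never leave the quantum setting: the Fourier decomposition into bias operators $\rho^{(s)}$, the dimension-free trace-to-$2$-norm passage via the $\sigma^{-1/4}$ sandwich, the factorization through $\rho_{X\qstate_1}\otimes\rho_{Y\qstate_2}$, and Lindsey's lemma in the operator form $(M^{(s)})^{\top}M^{(s)}=2^{qn}I$ together constitute a first-principles collision-entropy proof. The trade-off: the paper's argument is shorter and delegates all quantum content to cited lemmas, but its one-bit lifting step is quadratically lossy --- that is exactly where the weak exponent $\left(\frac{\delta}{4}-\frac{1}{8}\right)qn$ and the constant $\sqrt{3}$ come from; your integrated argument avoids that loss and, carried through as you describe, yields a bound of order $2^{q}\cdot 2^{-(\delta-\frac{1}{2})qn}$, i.e.\ decay rate $(2\delta-1)qn/2$ rather than $(2\delta-1)qn/8$ and prefactor $2^{q}$ rather than $2^{2q}$. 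This implies the stated theorem whenever the stated bound is nontrivial (for $\delta\le 1/2$ the stated bound exceeds $1$ and is vacuous). Your closing sentence therefore slightly misattributes the constants: the $2^{2q}$ and the quarter-powers in the theorem are artifacts of the paper's lossy classical-to-quantum reduction, and your route should not be expected to reproduce them exactly --- it does strictly better.

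Two technical points to nail down in a full write-up. First, your Lindsey step controls $\|\tilde G^{X}\|_2\,\|\tilde G^{Y}\|_2$, and passing to the collision quantities requires $\|\tilde G^{X}\|_2\le \mathrm{Tr}[\tilde G^{X}]$, which holds because the Gram matrices are positive semidefinite --- state this explicitly. Second, and more importantly, the sandwich state must be chosen with care: the collision entropy relative to the \emph{marginal} $\rho_{\qstate_i}$ is not known to dominate the operational $H_{\min}(X|\qstate_i)$, since the inequality $H_{\min}\le H_2$ holds only for a common reference state $\sigma$, while the operational $H_{\min}$ is a maximum over $\sigma$. You should therefore sandwich with $\sigma_1\otimes\sigma_2$, where $\sigma_i$ is the optimizer in the definition of $H_{\min}(\,\cdot\,|\qstate_i)$; because the side information is product, this choice preserves the tensor factorization your expansion relies on, and the rest of the computation goes through unchanged.
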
 

Theorem \ref{thm:garget} immediately follows from the following two lemmas, namely Lemma \ref{lem:xor} and Lemma \ref{lem:one-bit}.
\begin{lemma}[Classical-Quantum XOR-Lemma \cite{multi-source2}]\label{lem:xor}
Let $Z$ be a classical random variable of $q$ bits, and $\gqstate$ be a quantum system. Then 
\[
\|\rho_{Z\gqstate}-\rho_{U_q}\otimes\rho_{\gqstate}\|\leq 2^{q}\sum_{S\in\{0,1\}^q\setminus\{0^q\}} \|\rho_{(S\cdot Z)\circ\gqstate}-\rho_{U_1}\otimes\rho_{\gqstate}\|.
\]
Here $S\cdot Z:=\sum_{i=1}^n S_i\cdot Z_i$ (mod $2$) is the inner product over $\mathbb{F}_2$. 
\end{lemma}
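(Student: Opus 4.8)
The plan is to prove the bound by an operator-valued Fourier analysis on the classical register $\{0,1\}^q\cong\mathbb{F}_2^q$. Write the joint state as $\rho_{Z\gqstate}=\sum_z \Pr_z\ket{z}\bra{z}\otimes\rho_{\gqstate}^{z}$ with average state $\rho_{\gqstate}=\sum_z\Pr_z\rho_{\gqstate}^{z}$, and set $\Delta:=\rho_{Z\gqstate}-\rho_{U_q}\otimes\rho_{\gqstate}$. Since the classical register is diagonal in the basis $\{\ket{z}\}$, the operator $\Delta=\sum_z\ket{z}\bra{z}\otimes M_z$ is block diagonal with Hermitian blocks $M_z=\Pr_z\rho_{\gqstate}^{z}-2^{-q}\rho_{\gqstate}$ acting only on $\gqstate$; hence its trace norm splits as $\|\Delta\|=\sum_z\|M_z\|$. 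This reduction is the crucial first move: it lets me later apply the triangle inequality block-by-block and thereby avoid any spurious dimension factor coming from the side-information system $\gqstate$.

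Next I would introduce, for each $S\in\{0,1\}^q$, the character operator $\sigma_S:=\sum_z(-1)^{S\cdot z}\Pr_z\rho_{\gqstate}^{z}$ on $\gqstate$, noting that $\sigma_{0^q}=\rho_{\gqstate}$. A direct orthogonality computation over $\mathbb{F}_2^q$ gives the Fourier inversion $\Pr_z\rho_{\gqstate}^{z}=2^{-q}\sum_S(-1)^{S\cdot z}\sigma_S$, and subtracting the $S=0^q$ term yields $M_z=2^{-q}\sum_{S\neq0^q}(-1)^{S\cdot z}\sigma_S$. In parallel I would connect $\sigma_S$ to the one-bit quantities on the right-hand side: for fixed $S\neq 0^q$ the bit $S\cdot Z$ partitions the probability mass so that $\rho_{(S\cdot Z)\gqstate}$ has the two blocks $\tau_b=\tfrac12\bigl(\rho_{\gqstate}+(-1)^b\sigma_S\bigr)$, whence $\rho_{(S\cdot Z)\gqstate}-\rho_{U_1}\otimes\rho_{\gqstate}$ equals $\tfrac12\sigma_S$ on the $b=0$ block and $-\tfrac12\sigma_S$ on the $b=1$ block. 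By block-diagonality its trace norm is exactly $\|\rho_{(S\cdot Z)\gqstate}-\rho_{U_1}\otimes\rho_{\gqstate}\|=\|\sigma_S\|$.

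Finally I would combine the two identities. Applying the triangle inequality to $M_z=2^{-q}\sum_{S\neq0^q}(-1)^{S\cdot z}\sigma_S$ gives $\|M_z\|\le 2^{-q}\sum_{S\neq0^q}\|\sigma_S\|$ for every $z$, and summing over the $2^q$ values of $z$ yields $\|\Delta\|=\sum_z\|M_z\|\le\sum_{S\neq0^q}\|\sigma_S\|$. Substituting $\|\sigma_S\|=\|\rho_{(S\cdot Z)\gqstate}-\rho_{U_1}\otimes\rho_{\gqstate}\|$ from the previous step proves the claim, and since $2^q\ge1$ the stated inequality carrying the factor $2^q$ follows a fortiori (the argument in fact delivers the stronger bound without that factor).

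The main obstacle is conceptual rather than computational: correctly lifting the classical Fourier transform to an operator-valued transform on $\gqstate$ and verifying that the trace norm genuinely splits across the classical blocks, so that the triangle inequality is applied only after the block decomposition. Applying it prematurely to the full bipartite operator would lose control through a $\sqrt{\dim\gqstate}$ factor, which must be avoided. The remaining steps — the orthogonality relation for the characters and the one-bit block computation — are short and routine once the operators $\sigma_S$ are in place.
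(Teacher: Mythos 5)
Your proposal is correct, and it is worth noting that the paper itself gives no proof of this lemma at all --- it imports the statement wholesale from \cite{multi-source2} (Kasher--Kempe), so your argument is compared against the cited source rather than anything in the text. Your route is the natural one and every step checks out: the block decomposition $\rho_{Z\gqstate}-\rho_{U_q}\otimes\rho_{\gqstate}=\sum_z \ket{z}\bra{z}\otimes M_z$ with $\|\cdot\|$ splitting as $\sum_z\|M_z\|$ is valid because the operator is block diagonal along the classical register; the operator-valued Fourier inversion $\Pr_z\rho_{\gqstate}^z=2^{-q}\sum_S(-1)^{S\cdot z}\sigma_S$ follows from character orthogonality; and the identity $\|\rho_{(S\cdot Z)\circ\gqstate}-\rho_{U_1}\otimes\rho_{\gqstate}\|=\|\sigma_S\|$ is exactly right since the two blocks are $\pm\frac12\sigma_S$. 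Interestingly, your argument proves something strictly stronger than the stated lemma: the triangle inequality applied to $M_z=2^{-q}\sum_{S\neq 0^q}(-1)^{S\cdot z}\sigma_S$, summed over the $2^q$ values of $z$, yields the bound with constant $1$ in place of $2^q$. The $2^q$ factor in the paper's statement is an artifact of how the cited source formulates its XOR lemma (Kasher--Kempe proceed via an $\ell_2$-type bound of the form $\sqrt{2^q\sum_S\|\cdot\|^2}$, from which the linear form is extracted lossily), whereas your direct $\ell_1$ Fourier expansion avoids that slack entirely. Were the paper to use your version, the $+2q$ in the exponent of the error bound in Theorem~\ref{thm:garget} (one factor $2^q$ from this lemma, one from the sum over the $2^q-1$ nonzero $S$) would improve to $+q$, slightly relaxing the constraint on $n$ in the extractors; as it stands, your bound implies the stated one a fortiori since $2^q\geq 1$.
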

The following lemma will also be used, whose proof can be found in  Appendix \ref{sec:one-bit}.
\begin{lemma}\label{lem:one-bit}
Let $X,Y$ be random variables on $(\{0,1\}^q)^n$ and $\gqstate=\gqstate_1\otimes\gqstate_2$ be the product quantum side information where $H_{\min}(X|\gqstate_1)\geq \delta qn$ and  $H_{\min}(Y|\gqstate_2)\geq \delta qn$.
Let $Z=\Ext_{IP}^{n,q}(X,Y)$. Then for any $S\in\{0,1\}^q\setminus\{0^q\}$, we have 
\[
\frac{1}{2}\|\rho_{(S\cdot Z)\circ\gqstate}-\rho_{U_1}\otimes\rho_{\gqstate}\|\leq \sqrt{3}\cdot 2^{-\frac{1}{4}-\left(\frac{\delta}{4}-\frac{1}{8}\right)qn}.
\]
\end{lemma}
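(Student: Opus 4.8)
The plan is to reduce the single-bit trace distance to an operator inequality on $\gqstate$, linearise the finite-field inner product into the standard bit inner product, and then bound the resulting collision quantity in a dimension-free way, crucially exploiting that $\gqstate$ is product. First I would write the cq-state of the output bit $W:=S\cdot Z$ as $\rho_{W\circ\gqstate}=\ket0\bra0\otimes\sigma_0+\ket1\bra1\otimes\sigma_1$, where $\sigma_b:=\sum_{x,y:\,S\cdot\Ext_{IP}^{n,q}(x,y)=b}P_X(x)P_Y(y)\,\rho_{\gqstate_1}^x\otimes\rho_{\gqstate_2}^y$ and $\sigma_0+\sigma_1=\rho_\gqstate$. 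Since $\rho_{W\circ\gqstate}-\rho_{U_1}\otimes\rho_\gqstate=\tfrac12(\ket0\bra0-\ket1\bra1)\otimes D$ with $D:=\sigma_0-\sigma_1$, and the two classical blocks have orthogonal support, one gets $\tfrac12\|\rho_{W\circ\gqstate}-\rho_{U_1}\otimes\rho_\gqstate\|=\tfrac12\|D\|$, so it suffices to bound the trace norm $\|D\|$. Using independence and the product form $\rho_{XY\gqstate}=\rho_{X\gqstate_1}\otimes\rho_{Y\gqstate_2}$, I write $D=\sum_{x,y}(-1)^{S\cdot\sum_i x_iy_i}\,\omega_x\otimes\tau_y$ with $\omega_x:=P_X(x)\rho_{\gqstate_1}^x$ and $\tau_y:=P_Y(y)\rho_{\gqstate_2}^y$, so that $\sum_x\omega_x=\rho_{\gqstate_1}$ and $\sum_y\tau_y=\rho_{\gqstate_2}$.

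Next I would linearise the exponent via the trace map of $\mathbb{F}_{2^q}$. As $S\neq0^q$, there is a nonzero $\alpha\in\mathbb{F}_{2^q}$ with $S\cdot z=\mathrm{Tr}_{\mathbb{F}_{2^q}/\mathbb{F}_2}(\alpha z)$ for all $z$, whence $S\cdot\sum_i x_iy_i=\sum_i\mathrm{Tr}(\alpha x_iy_i)$. For each block $(x_i,y_i)\mapsto\mathrm{Tr}(\alpha x_iy_i)$ is a nondegenerate $\mathbb{F}_2$-bilinear form (the trace form is nondegenerate and $\alpha x_i\neq0$ whenever $x_i\neq0$), so the global form $(x,y)\mapsto S\cdot\sum_i x_iy_i$ equals $x^{\top}My$ for an invertible $M$ over $\mathbb{F}_2$, with $m:=qn$. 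Substituting $y\mapsto My$ — a relabelling of outcomes that leaves $\rho_{\gqstate_2}$ and $H_{\min}(Y|\gqstate_2)$ unchanged — I may assume the exponent is the ordinary inner product $\langle x,y\rangle$ over $\mathbb{F}_2^{m}$. The payoff is the Parseval identity $\sum_{x}(-1)^{\langle x,\,y\oplus y'\rangle}=2^{m}\,[y=y']$, which is the quantum avatar of Lindsey's lemma used classically.

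To control $\|D\|$ without any reference to the (unbounded) dimension of $\gqstate$, I would pass to a normalised Hilbert–Schmidt norm. Writing $D=\rho_\gqstate^{1/4}\big(\rho_\gqstate^{-1/4}D\rho_\gqstate^{-1/4}\big)\rho_\gqstate^{1/4}$ (with $\rho_\gqstate^{-1/4}$ the pseudo-inverse on $\mathrm{supp}\,\rho_\gqstate\supseteq\mathrm{supp}\,D$) and applying Hölder for Schatten norms with exponents $(4,2,4)$, together with $\|\rho_\gqstate^{1/4}\|_4=(\mathrm{Tr}\,\rho_\gqstate)^{1/4}=1$, gives $\|D\|\leq\big\|\rho_\gqstate^{-1/4}D\rho_\gqstate^{-1/4}\big\|_2$. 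Because $\gqstate$ is product, $\rho_\gqstate^{-1/4}=\rho_{\gqstate_1}^{-1/4}\otimes\rho_{\gqstate_2}^{-1/4}$, so with $\tilde\omega_x:=\rho_{\gqstate_1}^{-1/4}\omega_x\rho_{\gqstate_1}^{-1/4}$, $\tilde\tau_y:=\rho_{\gqstate_2}^{-1/4}\tau_y\rho_{\gqstate_2}^{-1/4}$ and $\hat\tau_x:=\sum_y(-1)^{\langle x,y\rangle}\tilde\tau_y$ I obtain $\rho_\gqstate^{-1/4}D\rho_\gqstate^{-1/4}=\sum_x\tilde\omega_x\otimes\hat\tau_x$, and hence $\|D\|^2\leq\Sigma:=\sum_{x,x'}\mathrm{Tr}(\tilde\omega_x\tilde\omega_{x'})\,\mathrm{Tr}(\hat\tau_x\hat\tau_{x'})$.

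The final and most delicate step is bounding $\Sigma$, and this is where I expect the real work to lie. Applying Cauchy–Schwarz to the Hilbert–Schmidt inner products (using $\tilde\omega_x\geq0$) gives $\Sigma\leq\big(\sum_x\sqrt{\mathrm{Tr}(\tilde\omega_x^2)\,\mathrm{Tr}(\hat\tau_x^2)}\big)^2\leq\big(\sum_x\mathrm{Tr}(\tilde\omega_x^2)\big)\big(\sum_x\mathrm{Tr}(\hat\tau_x^2)\big)$. The first factor is the conditional collision probability $\sum_x\mathrm{Tr}(\tilde\omega_x^2)=2^{-H_2(X|\gqstate_1)}\leq2^{-H_{\min}(X|\gqstate_1)}\leq2^{-\delta m}$, where $H_2\geq H_{\min}$ invokes the min-entropy hypothesis; in the second factor the Parseval identity collapses the $Y$-register, $\sum_x\mathrm{Tr}(\hat\tau_x^2)=2^{m}\sum_y\mathrm{Tr}(\tilde\tau_y^2)=2^{m}2^{-H_2(Y|\gqstate_2)}\leq2^{m-\delta m}$. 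Combining yields $\Sigma\leq2^{m-2\delta m}$ and $\tfrac12\|D\|\leq\tfrac12\,2^{-(\delta-\frac12)qn}$, which in the nontrivial regime $\delta>\tfrac12$ is at least as strong as the stated bound and in particular implies it. The conceptual crux — and the step that pins the threshold at entropy rate $1/2$ — is this Parseval collapse: the full rank of the inner-product form converts the $Y$-collision into $2^{m}2^{-H_2(Y|\gqstate_2)}$, so cancellation only beats the $2^{m}$ blow-up once $2\delta m>m$; and the product form of $\gqstate$ is precisely what lets $\Sigma$ factorise across the two sources, a factorisation that a genuinely joint adversary would destroy.
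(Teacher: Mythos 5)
Your proof is correct, but it takes a genuinely different route from the paper's. The paper is modular: it first shows (Proposition~\ref{prop:ip}) that $[a\times \Ext_{IP}^{n,q}(\cdot,\cdot)]_1$ is a \emph{hadamard function} (your field-trace linearization $S\cdot Z=\mathrm{Tr}_{\mathbb{F}_{2^q}/\mathbb{F}_2}(\alpha Z)$ is the same observation, phrased in the paper as ``first bit of $a\times Z$''), then invokes Chor--Goldreich (Lemma~\ref{thm:cg88}) to conclude it is a classical $(qn,k,2^{1-\frac{2k-qn}{2}})$ one-bit two-source extractor, and finally cites Corollary 27 of \cite{multi-source2} (Lemma~\ref{lem:classical2quantum}) as a black box to lift classical security to quantum-proof security against product side information, with $k=(\frac14+\frac\delta2)qn+\frac12$. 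You instead re-derive the quantum statement from scratch by a Renner-style leftover-hash argument: reduce the trace norm to a $\rho_{\gqstate}^{-1/4}$-weighted Hilbert--Schmidt norm via H\"older, factorize across the product adversary, and collapse the $Y$-register by Parseval. Your route buys a self-contained proof, a quantitatively stronger bound ($\frac12\, 2^{-(\delta-\frac12)qn}$ versus the stated $\sqrt3\cdot 2^{-\frac14-\frac{2\delta-1}{8}qn}$, i.e.\ a four-fold better exponent), and it exposes exactly where the product form of $\gqstate$ and the threshold $\delta>\frac12$ enter; the paper's route buys brevity and inherits all constants from the cited results.

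Two points need tightening. First, your inequality $\sum_x\mathrm{Tr}(\tilde\omega_x^2)=2^{-H_2(X|\gqstate_1)}\le 2^{-H_{\min}(X|\gqstate_1)}$ compares the collision entropy \emph{relative to $\rho_{\gqstate_1}$} with the paper's min-entropy, which is defined via the guessing probability (equivalently, optimized over conditioning states); this is true for cq states but does not follow from mere monotonicity in the R\'enyi order at fixed conditioning state. The one-line fix: $\sum_x\mathrm{Tr}(\tilde\omega_x^2)=\sum_x P_X(x)^2\,\mathrm{Tr}\bigl[\rho_{\gqstate_1}^x\rho_{\gqstate_1}^{-1/2}\rho_{\gqstate_1}^x\rho_{\gqstate_1}^{-1/2}\bigr]$ is exactly the success probability of the pretty-good measurement $E_x=\rho_{\gqstate_1}^{-1/2}P_X(x)\rho_{\gqstate_1}^x\rho_{\gqstate_1}^{-1/2}$, hence at most $p_{\mathrm{guess}}(X|\gqstate_1)=2^{-H_{\min}(X|\gqstate_1)}$; alternatively, run the H\"older step with $\sigma_1\otimes\sigma_2$ in place of $\rho_\gqstate$, where $\sigma_i$ are the min-entropy optimizers, after which only the elementary same-$\sigma$ inequality $H_2\ge H_{\min}$ is needed. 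Second, your closing claim that your bound ``implies'' the stated one holds only for $\delta>\frac12$; since the lemma is stated for all $\delta$, you should add the trivial observation that for $\delta\le\frac12$ the right-hand side $\sqrt3\cdot 2^{-\frac14-(\frac\delta4-\frac18)qn}\ge\sqrt3\cdot 2^{-\frac14}>1$ exceeds any trace distance, so the statement is vacuous there.
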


\subsection{Online Seedless Extractors for forward block sources}\label{sec:ext_rbs}

In this subsection, we design two online seedless quantum-proof extractors that can both extract a constant fraction of the min-entropy from forward block sources. The two extractors both proceed in the following fashion: partition the input raw data $X,Y$ into blocks, and apply the inner-product extractor $\Ext_{IP}$ to each block separately. Thus the two extractors can be paralleled naturally. The two extractors are described in Algorithms \ref{alg:rbseq} and \ref{alg:rbsneq} respectively. Given the input raw data $X,Y$ each of length $N$, the first extractor, named $\Ext_{\rbs}^{\rmmeq}$, evenly partitions the input raw data into blocks each of size $O(\log N)$, and the second extractor, named $\Ext_{\rbs}^{\rmmneq}$, requires incremental block lengths. Compared to the first extractor, the second one is less hardware-friendly and sacrifices the extraction speed in general. On the other hand, it enjoys the property that it can deal with infinite raw data, without the need to determine the raw data length prior to extraction.

\begin{algorithm}[h]
	\caption{$\Ext_{\rbs}^{\rmmeq}$}
	\label{alg:rbseq}
	\textbf{Input}: Two independent $(b,N,\delta)$-forward block sources $X=X_1\cdots X_N$ and $Y=Y_1\cdots Y_N$. And $0<\epsilon<1$;\\
     Let $n:=\left\lceil\frac{24}{2\delta-1}\right\rceil$, $q:=b\left\lceil \log_2 \left(\frac{N}{\epsilon n}\right)/b\right\rceil$, and $i:=1$;\\
	\For{$\ell=1$ to $Nb/qn$}{
		Let $I_{\ell}:=[i,i-1+qn/b]$;\\
		Compute $Z^{(\ell)}:=\Ext_{IP}^{n,q}(X_{I_{\ell}},Y_{I_{\ell}})$;\\
		Let $i:=i+n$;\\
		Output $Z^{(\ell)}$.
	}
\end{algorithm}

\begin{algorithm}[htb]
	\caption{$\Ext_{\rbs}^{\rmmneq}$}
	\label{alg:rbsneq}
	\textbf{Input}: Two $(b,\infty,\delta)$-forward block sources $X=X_1\cdots X_\infty$ and $Y=Y_1\cdots Y_\infty$ independent conditioned on quantum adversary $\gqstate$;\\
	\textbf{Parameter:} $q_1,\Delta\in\mathbb{N}^+$; \\
	Let $i:=1$ and $n:=\left\lceil\frac{24}{2\delta-1}\right\rceil$;\\
	\For{$\ell=1$ to $\infty$}{
		Let $I_{\ell}:=[i,i-1+q_{\ell}n/b]$;\\
		Compute $Z^{(\ell)}:=\Ext_{IP}^{n,q_{\ell}}(X_{I_{\ell}},Y_{I_{\ell}})$;\\
		Let $i:=i+q_{\ell}n/b$ and $q_{\ell+1}:=q_{\ell}+\Delta\cdot b$;\\
		Output $Z^{(\ell)}$.
	}
\end{algorithm}

Note that if we impose $\Delta=0$ and let $q_1=b\left\lceil \log_2 \left(\frac{N}{\epsilon n}\right)/b\right\rceil$, then $\Ext_{\rbs}^{\rmmneq}$ becomes $\Ext_{\rbs}^{\rmmeq}$.
We first analyze the second extractor $\Ext_{\rbs}^{\rmmneq}$.

\begin{figure}[h]
	\centering
	\includegraphics[scale=0.6]{./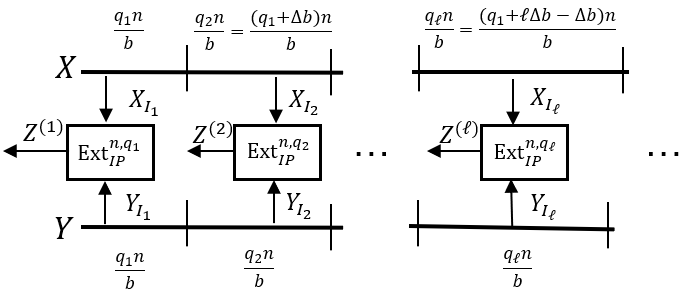}
	\caption{Illustration of $\Ext_{\rbs}^{\rmmneq}$.}
	\label{fig1}
\end{figure}

\begin{theorem}\label{thm:ext_rbs}
The extractor $\Ext_{\rbs}^{\rmmneq}$ satisfies the following properties. For any $k\in\mathbb{N}$,
\begin{equation}
\begin{aligned}		
\frac{1}{2}\|\rho_{Z^{(1)}\circ Z^{(2)}\circ\cdots\circ Z^{(k)}\qstate}-\rho_{U_{m_k}}\otimes \rho_{\qstate}\|
\leq  \sqrt{3}\cdot \sum_{\ell=1}^k2^{-\frac{1}{4}-\left(\frac{\delta}{4}-\frac{1}{8}\right) q_\ell n+2q_\ell}\leq \sqrt{3}\cdot \sum_{\ell=1}^k2^{-\frac{1}{4}-q_\ell}
<\sqrt{3}\cdot \frac{2^{-\frac{1}{4}-q_1}}{1-2^{-\Delta b}},
\end{aligned}
\end{equation}
where $m_k=\sum_{\ell=1}^{k} q_\ell =kq_1+\frac{(k-1)k\Delta b}{2}$.
\end{theorem}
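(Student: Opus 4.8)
The plan is to establish the first (and tightest) inequality by an inductive hybrid argument that strips off one block at a time, reducing the whole bound to $k$ separate invocations of the gadget extractor (Theorem~\ref{thm:garget}); the remaining two inequalities are then purely arithmetic. The structural fact that drives everything is that $Z^{(\ell)}$ is a deterministic function of the raw data $(X_{I_\ell},Y_{I_\ell})$ of block $\ell$ alone, and the blocks $I_1,I_2,\dots$ occupy disjoint, consecutive stretches of samples. Hence, once we condition on everything generated before block $\ell$, that block becomes a fresh two-source instance to which the gadget applies verbatim.

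First I would carry out the per-block reduction. Fix the prefix $X_{<I_\ell}=x_<$ and $Y_{<I_\ell}=y_<$, meaning all samples preceding block $\ell$. Because $I_\ell$ consists of exactly $q_\ell n/b$ consecutive samples, the forward block source property (Definition~\ref{def:rbs}), applied with the first and last indices of $I_\ell$, gives $H_{\min}(X_{I_\ell}| X_{<I_\ell}=x_<,\qstate)\geq (q_\ell n/b)\,\delta b=\delta q_\ell n$, and symmetrically for $Y$. To feed this into Theorem~\ref{thm:garget} two points must be verified: (i) under product side information $\qstate=\qstate_1\otimes\qstate_2$ the conditioned state still factorizes, since fixing $x_<$ acts only on the factor $\rho_{X\qstate_1}$ and fixing $y_<$ only on $\rho_{Y\qstate_2}$; and (ii) consequently the $X$-marginal is untouched by $y_<$ and the product structure lets us read the min-entropy bound off $\qstate_1$ alone, so the conditioned instance genuinely has two independent sources of min-entropy $\delta q_\ell n$ each over $\mathbb{F}_{2^{q_\ell}}^{\,n}$ with product side information. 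Theorem~\ref{thm:garget} then bounds the conditional trace distance $\tfrac12\|\rho_{Z^{(\ell)}\qstate}-\rho_{U_{q_\ell}}\otimes\rho_{\qstate}\|$ (both states conditioned on $x_<,y_<$) by $\sqrt{3}\cdot 2^{-\frac14-(\frac\delta4-\frac18)q_\ell n+2q_\ell}=:\eta_\ell$, for every fixing.

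Next comes the telescoping. Averaging the conditional bounds over $(x_<,y_<)$ — using that the trace distance of classical–quantum states splits additively over the classical register — yields $\tfrac12\|\rho_{W_{<\ell}Z^{(\ell)}\qstate}-\rho_{W_{<\ell}\qstate}\otimes\rho_{U_{q_\ell}}\|\leq\eta_\ell$, with $W_{<\ell}:=(X_{<I_\ell},Y_{<I_\ell})$. As $Z^{(1)},\dots,Z^{(\ell-1)}$ are functions of $W_{<\ell}$, data processing gives $\tfrac12\|\rho_{Z^{(1)}\circ\cdots\circ Z^{(\ell)}\qstate}-\rho_{Z^{(1)}\circ\cdots\circ Z^{(\ell-1)}\qstate}\otimes\rho_{U_{q_\ell}}\|\leq\eta_\ell$. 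Running induction on $k$, the triangle inequality bounds the $k$-block distance by $\eta_k$ plus $\tfrac12\|\rho_{Z^{(1)}\circ\cdots\circ Z^{(k-1)}\qstate}\otimes\rho_{U_{q_k}}-\rho_{U_{m_{k-1}}}\otimes\rho_{U_{q_k}}\otimes\rho_{\qstate}\|$; since tensoring both arguments with the fixed state $\rho_{U_{q_k}}$ preserves the norm and $m_k=m_{k-1}+q_k$, the second term is exactly the $(k-1)$-block distance, closing the induction and yielding $\sqrt{3}\sum_{\ell=1}^k 2^{-\frac14-(\frac\delta4-\frac18)q_\ell n+2q_\ell}$.

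Finally the arithmetic. The second inequality uses $n=\lceil 24/(2\delta-1)\rceil$: since $\frac\delta4-\frac18=\frac{2\delta-1}{8}$ we get $(\frac\delta4-\frac18)n\geq 3$, so term by term $-(\frac\delta4-\frac18)q_\ell n+2q_\ell\leq -3q_\ell+2q_\ell=-q_\ell$. The third is the geometric sum: with $q_\ell=q_1+(\ell-1)\Delta b$ we have $\sum_{\ell=1}^k 2^{-q_\ell}=2^{-q_1}\sum_{\ell=0}^{k-1}2^{-\ell\Delta b}<2^{-q_1}/(1-2^{-\Delta b})$. I expect the main obstacle to lie in the per-block reduction of the second paragraph — namely, in checking rigorously that fixing the classical prefixes $x_<,y_<$ preserves both the product form of the side information and the independence of the two conditioned sources, so that the gadget's hypotheses are met at every block; once this is secured, the telescoping and the two estimates are routine.
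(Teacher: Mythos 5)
Your proposal is correct and follows essentially the same route as the paper: both use the forward-block property to get min-entropy $\delta q_\ell n$ for block $\ell$ conditioned on the prefix, apply the gadget extractor (Theorem \ref{thm:garget}) block-wise under that conditioning, and then combine via contractivity (data processing) and the triangle inequality, with the same arithmetic for the last two inequalities. The only difference is bookkeeping: you run the hybrid argument forwards, replacing the prefix raw data $W_{<\ell}$ by the prefix outputs $Z^{(1)}\circ\cdots\circ Z^{(\ell-1)}$ and inducting on the number of blocks, whereas the paper inducts backwards on $\ell$ with the invariant that the suffix outputs $Z^{>\ell}$ are near-uniform relative to the prefix raw data $X^{\leq\ell}Y^{\leq\ell}\qstate$ — the two are mirror images of the same telescoping argument.
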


\begin{proof} Fix $k\in\mathbb{N}$. For convenience of presentation, for any $0\leq \ell\leq k$, we use $Z^{>\ell}$ to represent $Z^{(\ell+1)}\circ Z^{(\ell+2)}\circ\cdots\circ Z^{(k)}$, $X^{\leq \ell}$ to represent $X_{I_1}X_{I_2}\cdots X_{I_{\ell}}$, and $Y^{\leq \ell}$ to represent $Y_{I_1}Y_{I_2}\cdots Y_{I_{\ell}}$. Besides, let $\tilde{q}_{\ell}:=q_{\ell+1}+q_{\ell+2}+\cdots+q_{k}$.

 In fact, we will prove that for any $0\leq \ell\leq k$, it has
\begin{equation}\label{eq:induction} 
\begin{aligned}
\frac{1}{2}\left\|\rho_{Z^{>\ell}X^{\leq \ell}Y^{\leq \ell}\qstate}-\rho_{U_{\tilde{q}_\ell}}\otimes \rho_{X^{\leq \ell}Y^{\leq \ell}\qstate}\right\|\leq \sqrt{3}\cdot \sum_{j=\ell+1}^k 2^{-\frac{1}{4}-\left(\frac{\delta}{4}-\frac{1}{8}\right)q_j n+2q_j},
\end{aligned}
\end{equation}
which implies the theorem immediately by letting $\ell=0$.

The proof is by an induction on $\ell$. The base case when $\ell=k$ is trivial. The induction proceeds as follows. Assume Eq.~\eqref{eq:induction} is true. Then due to the contractivity of trace-preserving quantum operations, we have
\begin{equation}\label{eq:proof1}
\begin{aligned}
	\frac{1}{2}\big\|\rho_{Z^{>\ell}\Ext_{IP}\left(X_{I_{\ell}},Y_{I_{\ell}}\right)X^{\leq \ell-1}Y^{\leq \ell-1}\qstate}-\rho_{U_{\tilde{q}_\ell}}\otimes \rho_{\Ext_{IP}\left(X_{I_{\ell}},Y_{I_{\ell}}\right)X^{\leq \ell-1}Y^{\leq \ell-1}\qstate}\big\|
\leq \sqrt{3}\cdot \sum_{j=\ell+1}^k 2^{-\frac{1}{4}-\left(\frac{\delta}{4}-\frac{1}{8}\right)q_j n+2q_j}.
	\end{aligned}
\end{equation}

	On the other hand, by Definition \ref{def:rbs}, for any assignment $x^{\leq \ell-1}$ of $X^{\leq \ell-1}$ and $y^{\leq \ell-1}$ of $Y^{\leq \ell-1}$, we have
	\begin{equation}
 \begin{aligned}
	H_{\min}(X_{I_{\ell}}| X^{\leq \ell-1}=x^{\leq \ell-1},\qstate)\geq \delta q_{\ell}n,  \mbox{ and }
 H_{\min}(Y_{I_{\ell}}| Y^{\leq \ell-1}=y^{\leq \ell-1},\qstate)\geq \delta q_{\ell}n.
 \end{aligned}
	\end{equation}
	Then, recalling that $\Ext_{IP}$ is a min-entropy quantum-proof two-source extractor (Theorem \ref{thm:garget}), it follows that
\begin{align*}	
\frac{1}{2}\left\|\rho_{\Ext_{IP}\left(X_{I_{\ell}},Y_{I_{\ell}}\right)X^{\leq \ell-1}Y^{\leq \ell-1}\qstate}-\rho_{U_{q_\ell}}\otimes \rho_{X^{\leq \ell-1}Y^{\leq \ell-1}\qstate}\right\|
\leq  \sqrt{3}\cdot 2^{-\frac{1}{4}-\left(\frac{\delta}{4}-\frac{1}{8}\right)q_\ell n+2q_\ell}.
\end{align*}
Thus,
\begin{equation}\label{eq:proof2}
	\begin{aligned} 
\frac{1}{2}\big\|\rho_{U_{\tilde{q}_\ell}}\otimes\rho_{\Ext_{IP}\left(X_{I_{\ell}},Y_{I_{\ell}}\right)X^{\leq \ell-1}Y^{\leq \ell-1}\qstate}-\rho_{U_{\tilde{q}_\ell}}\otimes\rho_{U_{q_\ell}}\otimes \rho_{X^{\leq \ell-1}Y^{\leq \ell-1}\qstate}\big\|
\leq  \sqrt{3}\cdot 2^{-\frac{1}{4}-\left(\frac{\delta}{4}-\frac{1}{8}\right)q_\ell n+2q_\ell}.
\end{aligned}
\end{equation}

Finally, combining inequalities \eqref{eq:proof1} and \eqref{eq:proof2} and applying the triangle inequality, we conclude that
\begin{equation}
\begin{aligned}
 \frac{1}{2}\big\|\rho_{Z^{>\ell}\Ext_{IP}\left(X_{I_{\ell}},Y_{I_{\ell}}\right)X^{\leq \ell-1}Y^{\leq \ell-1}\qstate} -\rho_{U_{\tilde{q}_\ell}}\otimes\rho_{U_{q_\ell}}\otimes \rho_{X^{\leq \ell-1}Y^{\leq \ell-1}\qstate}\big\|
\leq \sqrt{3}\cdot \sum_{j=\ell}^k 2^{-\frac{1}{4}-\left(\frac{\delta}{4}-\frac{1}{8}\right)q_j n+2q_j}
\end{aligned}
\end{equation}
That is, 
\begin{equation*} 
\begin{aligned}
\frac{1}{2}\left\|\rho_{Z^{>\ell-1}X^{\leq \ell-1}Y^{\leq \ell-1}\qstate}-\rho_{U_{\tilde{q}_{\ell-1}}}\otimes \rho_{X^{\leq \ell-1}Y^{\leq \ell-1}\qstate}\right\|
\leq \sqrt{3}\cdot \sum_{j=\ell}^k 2^{-\frac{1}{4}-\left(\frac{\delta}{4}-\frac{1}{8}\right)q_j n+2q_j},
\end{aligned}
\end{equation*}
So  Eq.~\eqref{eq:induction} also holds for $\ell-1$. The proof is finished.
\end{proof}

As can be seen from the proof, to extract infinitely long randomness, the parameter $\Delta$ of $\Ext_{\rbs}^{\rmmneq}$ must be strictly positive, since the upper bound $\sqrt{3}\cdot \sum_{\ell=1}^k2^{-\frac{1}{4}-\left(\frac{\delta}{4}-\frac{1}{8}\right) q_\ell n+2q_\ell}$ on the error converges if and only if $\Delta>0$. Moreover, the infinitely long output string $Z^{(1)}\circ Z^{(2)}\circ\cdots$ can be arbitrarily close to the uniform distribution by choosing a sufficiently large constant $q_1$. 
In addition, $\Ext_{\rbs}^{\rmmneq}$ extracts $\frac{q_\ell}{2\delta q_\ell n}\approx \frac{2\delta-1}{48\delta}$ fraction of the min-entropy of the raw data $X_{I_{\ell}},Y_{I_{\ell}}$. Finally, we remark there is a tradeoff between the error and the computational complexity by setting $\Delta$. As $\Delta$ increases, the error becomes smaller, but the size of block increases, which leads to heavy computations.

We intuitively explain Theorem~\ref{thm:ext_rbs}, i.e., how our algorithms work for the forward block sources. When the input raw data $X,Y$ are forward block sources, the block $X_{I_{\ell}}$ and $Y_{I_\ell}$ are independent and each has large min-entropy conditioned on any assignment $(x^{<\ell},y^{<\ell})$ of previous blocks $(X^{<\ell},Y^{<\ell})$, which implies that the output $Z^{(\ell)}$ of this block is (approximately) uniform conditioned on any assignment $(x^{<\ell},y^{<\ell})$. Note that $z^{<\ell}$ can be fully determined by $(x^{<\ell},y^{<\ell})$. So $Z^{(\ell)}$ is (approximately) uniform conditioned on any assignment $z^{<\ell}$ of $Z^{<\ell}$. By induction on $\ell$, we can conclude that the whole output $Z^{(1)}\circ\cdots\circ Z^{(k)}\circ\cdots$ is (approximately) uniform. 

As a corollary of Theorem \ref{thm:ext_rbs}, we have the following result for $\Ext_{\rbs}^{\rmmeq}$.
\begin{theorem}\label{thm:ext_rbseq}
The output $Z^{(1)}\circ\cdots\circ Z^{(Nb/qn)}$ of $\Ext_{\rbs}^{\rmmeq}$ satisfies that
\begin{equation}
  \frac{1}{2}\|\rho_{Z^{(1)}\circ Z^{(2)}\circ\cdots\circ Z^{(Nb/qn)}\circ\qstate}-\rho_{U_{m}}\otimes \rho_{\qstate}\|\leq \epsilon,
\end{equation}
where $m:=\frac{Nb}{n}$. 
\end{theorem}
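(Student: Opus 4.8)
The plan is to derive Theorem~\ref{thm:ext_rbseq} as a direct specialization of Theorem~\ref{thm:ext_rbs}. Recall the remark immediately preceding the theorem: setting $\Delta=0$ and $q_1=b\left\lceil \log_2\left(\frac{N}{\epsilon n}\right)/b\right\rceil$ turns $\Ext_{\rbs}^{\rmmneq}$ into $\Ext_{\rbs}^{\rmmeq}$. So first I would fix these parameter choices, which force $q_\ell=q$ for all $\ell$, where $q:=q_1$, and the number of blocks is exactly $k=Nb/qn$. Then the output string $Z^{(1)}\circ\cdots\circ Z^{(Nb/qn)}$ of $\Ext_{\rbs}^{\rmmeq}$ is precisely the length-$k$ output of $\Ext_{\rbs}^{\rmmneq}$ under these parameters, so Theorem~\ref{thm:ext_rbs} applies verbatim.

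Next I would invoke the error bound from Theorem~\ref{thm:ext_rbs}. With $q_\ell=q$ for all $\ell$ and $k=Nb/qn$ blocks, the trace-distance bound becomes
\begin{equation*}
\frac{1}{2}\|\rho_{Z^{(1)}\circ\cdots\circ Z^{(Nb/qn)}\qstate}-\rho_{U_{m}}\otimes \rho_{\qstate}\|\leq \sqrt{3}\cdot \frac{Nb}{qn}\cdot 2^{-\frac{1}{4}-\left(\frac{\delta}{4}-\frac{1}{8}\right)qn+2q},
\end{equation*}
where $m=\sum_{\ell=1}^{Nb/qn} q=\frac{Nb}{qn}\cdot q=\frac{Nb}{n}$, matching the claimed output length. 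The remaining work is purely a calculation: I must verify that the right-hand side is at most $\epsilon$ for the chosen value of $q$.

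The key step is bounding the exponent. From $q\geq \log_2\left(\frac{N}{\epsilon n}\right)$ (by definition of $q$ via the ceiling, dropping the $b$-rounding which only increases $q$) and using $n:=\lceil 24/(2\delta-1)\rceil$, one checks that $\left(\frac{\delta}{4}-\frac{1}{8}\right)n=\frac{2\delta-1}{8}n\geq \frac{2\delta-1}{8}\cdot\frac{24}{2\delta-1}=3$, so the dominant term $-\left(\frac{\delta}{4}-\frac{1}{8}\right)qn$ is at most $-3q$. Combined with the $+2q$ term this leaves an exponent of at most $-q$, so the summand is bounded by $\sqrt{3}\cdot 2^{-\frac14-q}$ and the full bound is at most $\sqrt{3}\cdot\frac{Nb}{qn}\cdot 2^{-\frac14-q}$. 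Then using $2^{-q}\leq \frac{\epsilon n}{N}$ and the crude estimates $\sqrt{3}\cdot 2^{-1/4}<\frac{b}{q}\cdot(\text{constant})$ gives the $\leq\epsilon$ bound; this factor-chasing is the only delicate part, so I would keep a little slack in the $n$ threshold to absorb the polynomial prefactor $\frac{Nb}{qn}$.

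The main obstacle I anticipate is not conceptual but arithmetic: ensuring the polynomial prefactor $\frac{Nb}{qn}$ is dominated by the gap between the exponent bound $2^{-q}$ and the target $2^{-q}\leq\frac{\epsilon n}{N}$. Since $q\geq\log_2(N/(\epsilon n))$ gives $2^{-q}\leq \epsilon n/N$, I would want to confirm that the leftover constant factor $\sqrt{3}\cdot 2^{-1/4}\cdot\frac{b}{qn}\cdot\frac{1}{1}$ times $\frac{Nb}{qn}\cdot\frac{\epsilon n}{N}=\frac{\epsilon b}{q}$ stays below $\epsilon$, i.e.\ that the prefactor is at most $1$, which holds once $q$ is a sufficiently large multiple of $b$ (guaranteed by the $b\lceil\cdot/b\rceil$ rounding in the definition of $q$). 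I would verify this slack carefully, as it is the one place where an off-by-a-constant error could break the claimed bound.
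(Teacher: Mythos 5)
Your proposal follows essentially the same route as the paper: the paper states Theorem \ref{thm:ext_rbseq} as an immediate corollary of Theorem \ref{thm:ext_rbs}, obtained exactly by the specialization $\Delta=0$, $q_\ell\equiv q$, $k=Nb/qn$ that you carry out, and your exponent bookkeeping (using $(\tfrac{\delta}{4}-\tfrac{1}{8})n\ge 3$ to reduce each summand to $2^{-\frac14-q}$, then $2^{-q}\le \epsilon n/N$) is precisely the calculation the paper leaves implicit. The one caveat you flag is real but your resolution of it is slightly off: the leftover prefactor is $\sqrt{3}\cdot 2^{-1/4}\cdot \tfrac{b}{q}\approx 1.46\,\tfrac{b}{q}$, and the $b\lceil\cdot/b\rceil$ rounding only guarantees $q\ge b$, not the $q\ge 2b$ (equivalently $\log_2(N/(\epsilon n))>b$) needed to push this factor below $1$. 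This mild condition holds in any realistic parameter regime (the paper's simulation has $q=5b$) and is an unstated assumption of the paper's own corollary as well, so it is a shared imprecision rather than a defect specific to your argument.
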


\section{Simulations of the real-time randomness generation rate}\label{sec:simulation}
In this section, we provide an analysis of the computational efficiency of our extractors. For simplicity of discussion, we focus on the first extractor  $\Ext_{\rbs}^{\rmmeq}$. 
\subsection{Asymtoptic analysis.} Suppose we apply $\Ext_{\rbs}^{\rmmeq}$ on two independent $(b,N,\delta)$-forward block sources. The extractor will set $n=O(1)$ and $q=O(\log N)$, and then compute the inner product function $\Ext_{IP}^{n,q}$ for $Nb/qn=O(Nb/\log N)$ times. The computation of $\Ext_{IP}^{n,q}$ involves $(n-1)$ additions `$+_q$' and $n$ multiplications `$\cdot_q$' over the finite field $\mathbb{F}_{2^q}$. The addition `$+_q$' is the bit-wise parity in $\{0,1\}^q$ and thus can be implemented by $q$ parity gates. The multiplication `$\cdot_q$' can be implemented by a $O(q\log q\log\log q)$-size Boolean circuit (see the last inequality in Section 2 and the third paragraph on Page 3 in \cite{DBLP:conf/aaecc/RifaB91}). So $\Ext_{IP}^{n,q}$ can be computed by a $O(nq\log q\log\log q)$-size Boolean circuit. Therefore, it requires $O(Nb/\log N)\cdot O(n q\log q\log\log q)=O(Nb\log\log N\log\log\log N)$ logic operations in total to execute $\Ext_{\rbs}^{\rmmeq}$.

We compare the computational complexity of $\Ext_{\rbs}^{\rmmeq}$ with the seeded extractor proposed in \cite{li2023improved}, which evenly partitions the input raw data into blocks each of size $O(\log N)$ and applies the Toeplitz-hashing extractor to each block separately with the same seed of $O(\log N)$ bits. Note that it needs $O(\log^2 N)$ logic operations to multiply a $O(\log N)\times O(\log N)$ dimensional Toeplitz matrix with a $O(\log N)$ dimensional vector. Thus, it requires  $O(Nb/\log N)\cdot O(\log^2 N)=O(Nb\log N)$ logic operations in total to execute the seeded extractor. So $\Ext_{\rbs}^{\rmmeq}$ is asymptotically faster than the seeded extractor.

\subsection{Simulations.} In the following, we make a simulation estimating the extraction speed of the first extractor $\Ext_{\rbs}^{\rmmeq}$ implemented in the Xilinx Kintex-7 XC7K480T Field Programmable Gate Array (FPGA), a common application in industry. 
We use the raw data from Ref.~\cite{gehring2021homodyne} as the random source, which is a forward block source with parameters $b=16$ and $\delta=10.74/16\approx 0.67$. We consider a raw data length of $N=2^{51}$ bits and a total security parameter $\epsilon=2^{-30}$, which means the final output data is $2^{-30}$-close to a uniform distribution. Correspondingly, the algorithmic parameters $n,q$ of $\Ext_{\rbs}^{\rmmeq}$ are specified to $n=\left\lceil 24/(2\delta-1)\right\rceil=71$ and $q:=b\left\lceil \log_2 \left(N/\epsilon n\right)/b\right\rceil=80$. By Theorem \ref{thm:ext_rbseq}, $\Ext_{\rbs}^{\rmmeq}$ uses no seed and will output about $0.45$ PB random bits.

Precisely, $\Ext_{\rbs}^{\rmmeq}$ simply divides the input raw data $X,Y$ into blocks each containing $71\times 80$ bits, and then apply $\Ext_{IP}^{71,80}$ on each block $(X_{I_\ell},Y_{I_\ell})$. The inner product function $\Ext_{IP}^{71,80}$ involves $71-1=70$ addition operations  and $71$ multiplication operations in the field $\mathbb{F}_{2^{80}}$. The addition operation in $\mathbb{F}_{2^{80}}$ is the bit-wise parity operation in $\{0,1\}^{80}$, and so can be computed in $80$ `$\oplus$' operations. Each multiplication operation in $\mathbb{F}_{2^{80}}$ can be computed by at most $4885$ `$\oplus$' or `$\wedge$'  operations \cite{DBLP:conf/aaecc/RifaB91}. Thus, each $\Ext_{IP}^{71,80}$ can be implemented  by at most $80\times (71-1)+ 4885\times 71= 352435$ `$\oplus$' or `$\wedge$'  operations.



The parameters of the Xilinx Kintex-7 XC7K480T FPGA are as follows. The clock rate is set to be $200$ MHz; the number of Look-Up-Tables (LUTs) is $3\times 10^5$; each LUT can perform $5$ basic logical operations (e.g., `$\oplus$' or `$\wedge$') simultaneously.  To make full use of the FPGA, we can perform the matrix multiplications of $\lfloor 3\times 10^5 \times 5/352435\rfloor=4$ blocks in parallel. 
Therefore, the extraction speed of $\Ext^{\rmmeq}_{\rbs}$ is $200\times 10^6\times 4\times 80 = 64$ Gbps, which is significantly improved by one order of magnitude compared to the state-of-the-art result with a speed of 18.8 Gbps \cite{bai202118}. As a result, 
our online extraction is adequate for the post-processing of the QRNG in \cite{gehring2021homodyne}.


\section{Conclusion}

In conclusion, we introduce a real-time two-source quantum-proof randomness extractor, a crucial advancement for applications such as quantum key distribution.
The key contribution lies in the design of online algorithms based on the concept of a forward block source. We demonstrate that a constant fraction of min-entropy can be extracted from two arbitrarily long independent forward block sources. Notably, this extractor accommodates a broad class of quantum random number generators. The inherent parallelizability of our extractor enhances its practicality for real-world applications.

Looking forward, there are several open questions. Firstly, our extractor currently necessitates a min-entropy rate $\delta > 1/2$. It remains an intriguing challenge to explore whether this restriction can be dropped or imitated without compromising the security and efficiency of the extraction process. Secondly, the design of an online extractor for more general sources, or even the most general min-entropy source, remains an open avenue for future research. These questions underscore the ongoing pursuit of robust quantum-proof randomness extraction techniques with broader applicability and enhanced flexibility. Our work lays a foundation for these inquiries, marking a significant step towards securing communication and cryptographic protocols in the quantum era.

\bibliographystyle{apsrev4-2}
\bibliography{ref}

\appendix
\section{Proof of Lemma \ref{lem:one-bit}}\label{sec:one-bit}
In this section, we prove Lemma \ref{lem:one-bit}. Let $X,Y$ be random variables on $(\{0,1\}^q)^n$ and $\gqstate=\gqstate_1\otimes \gqstate_2$ be the product quantum side information where $H_{\min}(X| \gqstate_1)\geq \delta qn$ and $H_{\min}(Y|\gqstate_2)\geq \delta qn$. Let $Z=\Ext_{IP}^{n,q}(X,Y)$, and fix $S\in\{0,1\}^q\setminus\{0^q\}$ arbitrarily. We will use ``$\times$" to denote the multiplication operation in the field $\mathbb{F}_{2^q}$. 

Note that for any $S\in\{0,1\}^q$, there exists a $a\in \mathbb{F}_{2^q}$ such that $S\cdot Z$ corresponds exactly to the first bit of $a\times Z$ viewed as a vector in $\{0,1\}^q$. Moreover this correspondence between $S$ and $a$ is a bijection in the sense that different $S\in\{0,1\}^q$ correspond to different $a\in\mathbb{F}_{2^q}$. The special case when $S=0^q$ corresponds to the case $a=0$. Thus, for any $S\in\{0,1\}^q\setminus\{0^q\}$, there exists some non-zero $a\in\mathbb{F}_{2^q}$ such that $S\cdot Z=[a\times Z]_1$. Here, $[a\times Z]_1$ is used to denote the first bit of $a\times Z$ viewed as a vector in $\{0,1\}^q$.

First, we show that the function $[a\times \Ext_{IP}^{n,q}(\cdot,\cdot)]_1$ is a hadamard function.
\begin{definition}[hadamard function]\label{def:hadamard}
A function $f:\{0,1\}^t\times\{0,1\}^t\rightarrow \{0,1\}$ is called a hadamard function if for any two distinct $x,x'\in\{0,1\}^t$, it has $\sum_{y\in\{0,1\}^t}(-1)^{f(x,y)+f(x',y)}=0$.
\end{definition}
\begin{proposition}\label{prop:ip}
For any non-zero $a\in\mathbb{F}_{2^q}$, $[a\times \Ext_{IP}^{n,q}(\cdot,\cdot)]_1$ is a hadamard function.
\end{proposition}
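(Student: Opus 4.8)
The plan is to reduce the Hadamard condition to a standard character-sum (Walsh) orthogonality computation, exploiting two structural facts: the $\mathbb{F}_2$-linearity of the ``first bit'' map $[\cdot]_1$, and the absence of zero divisors in the field $\mathbb{F}_{2^q}$. Throughout, note that the relevant input length is $t=qn$, so the sum ranges over $y\in(\{0,1\}^q)^n=\mathbb{F}_{2^q}^n$.

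First I would fix two distinct inputs $x,x'\in(\{0,1\}^q)^n$ and examine the exponent $f(x,y)+f(x',y)$ modulo $2$, where $f:=[a\times\Ext_{IP}^{n,q}(\cdot,\cdot)]_1$. Since the map sending a field element of $\mathbb{F}_{2^q}$ (viewed as a vector in $\{0,1\}^q$) to its first coordinate is $\mathbb{F}_2$-linear, and both field addition and multiplication by the fixed element $a$ are $\mathbb{F}_2$-linear, I can collapse the exponent into a single term $\big[a\times\sum_{i=1}^n w_i y_i\big]_1$, where $w_i:=x_i+x'_i\in\mathbb{F}_{2^q}$ (subtraction and addition coincide in characteristic $2$). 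Writing $g(y):=\big[a\times\sum_i w_i y_i\big]_1$, the Hadamard sum becomes $\sum_{y}(-1)^{g(y)}$.

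The next step is to observe that $g:\mathbb{F}_{2^q}^n\to\{0,1\}$ is an $\mathbb{F}_2$-linear functional of $y$ (regarding $\mathbb{F}_{2^q}^n$ as the $\mathbb{F}_2$-vector space $\{0,1\}^{qn}$). I would then invoke the elementary fact that for any $\mathbb{F}_2$-linear $g$, the character sum $\sum_{y}(-1)^{g(y)}$ equals $0$ whenever $g$ is not identically zero (and $2^{qn}$ otherwise). Hence it suffices to prove that $g\not\equiv 0$.

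Finally, I would verify $g\not\equiv 0$ using the field structure. Because $x\neq x'$, some coordinate $w_{i_0}$ is nonzero; since $a$ is nonzero and $\mathbb{F}_{2^q}$ has no zero divisors, the product $a\times w_{i_0}$ is nonzero, so multiplication by it permutes $\mathbb{F}_{2^q}$. Setting all $y_i=0$ except $y_{i_0}$ and letting $y_{i_0}$ range over the field, $a\times w_{i_0}\times y_{i_0}$ attains every value, in particular some element whose first bit is $1$; thus $g$ takes the value $1$ and is nontrivial. The main (and essentially only) obstacle is the bookkeeping in the first step—correctly arguing that composing the $\mathbb{F}_2$-linear maps ``multiply by $a$'' and ``take the first bit'' with the bilinear inner product lets the two exponent terms merge into one linear form—after which the conclusion follows from routine character-sum orthogonality together with the no-zero-divisor property of the field.
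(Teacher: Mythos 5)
Your proof is correct and takes essentially the same approach as the paper: both exploit $\mathbb{F}_2$-linearity of the first-bit map and field operations to collapse the exponent to $\bigl[a\times\sum_i (x_i+x_i')\times y_i\bigr]_1$, then use the no-zero-divisor property so that multiplication by $a\times(x_{i_0}+x_{i_0}')\neq 0$ is a bijection on $\mathbb{F}_{2^q}$. The only cosmetic difference is packaging: the paper factors the sum over $y$ into a product and shows the $y_{i_0}$-factor vanishes by direct half-and-half counting, whereas you phrase the same balancedness as orthogonality of a nonzero $\mathbb{F}_2$-linear functional on $\{0,1\}^{qn}$.
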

\begin{proof}
For simplicity of notations, we let $f_a(x,y):=[a\times \Ext_{IP}^{n,q}(\cdot,\cdot)]_1$. Given any two distinct $x=x_1\cdots x_n,x'=x_1'\cdots,x_{n}'\in (\{0,1\}^{q})^n$, w.l.o.g. we assume $x_1\neq x_1'$. Recalling that the addition operation $``+_q"$ in the field $\mathbb{F}_{2^q}$ corresponds to the bitwise-parity operation in $\{0,1\}^{q}$, we have 
\begin{align*}
\sum_{y\in(\{0,1\}^{q})^n} (-1)^{f_a(x,y)+f_a(x',y)}=&\sum_{y\in(\{0,1\}^{q})^n} (-1)^{f_a(x,y)-f_a(x',y)}=\sum_{y\in(\{0,1\}^{q})^n} (-1)^{f_a(x-x',y)}\\
=&\sum_{y\in(\{0,1\}^{q})^n} (-1)^{[a\times \Ext_{IP}^{n,q}(x-x',y)]_1}=\sum_{y\in(\{0,1\}^{q})^n} (-1)^{[a\times (x_1-x'_1)\times y_1]_1+\left[\sum_{i=2}^n a\times (x_i-x_i')\times y_i\right]_1}\\
=&\left(\sum_{y_1\in\{0,1\}^{q}} (-1)^{\left[a\times (x_1-x_1')\times y_1\right]_1}\right)\cdot \left(\sum_{y_2,\cdots,y_n\in\{0,1\}^q}(-1)^{\left[\sum_{i=2}^n a\times (x_i-x_i')\times y_i\right]_1}\right).
\end{align*}
Note that $a\times (x_1-x_1')\neq 0$, so 
\begin{align*}
\left|\left\{y_1\in\{0,1\}^q| \left[a\times (x_1-x_1')\times y_1\right]_1=1\right\}\right|
=\left|\left\{y_1\in\{0,1\}^q | \left[a\times (x_1-x_1')\times y_1\right]_1=0\right\}\right|=2^{q-1}.
\end{align*}
Thus we get the conclusion immediately by Definition \ref{def:hadamard}. 
\end{proof}

\begin{definition}[One-bit-output two-source extractor]
A function $f:\{0,1\}^{t}\times \{0,1\}^{t}\rightarrow \{0,1\}$ is a $(t,k,\epsilon)$ one-bit-output two-source extractor, if for every pair of independent sources $(X,Y)$ with min-entropy $\geq k$ each, we have
\[
(1-\epsilon)/2\leq \Pr[f(X,Y)=1]\leq (1+\epsilon)/2.
\]
\end{definition}
\begin{lemma}[\cite{CG88}]\label{thm:cg88}
Let $f:\{0,1\}^t\times\{0,1\}^t\rightarrow \{0,1\}$ be a hadamard function. Then $f$ is a $(t,k,2^{1-\frac{2k-t}{2}})$ one-bit-output two-source extractor for any $k\leq t$.

In other words, let $X,Y$ be independent random variables on $\{0,1\}^t$ with min-entropy $\geq k$ each. Then $(1-\epsilon)/2\leq \Pr[f(X,Y)=1]\leq (1+\epsilon)/2$ where $\epsilon=2^{1-\frac{2k-t}{2}}$.
\end{lemma}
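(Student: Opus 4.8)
The plan is to reduce the statement to a bound on the bias of $f(X,Y)$ and then estimate that bias by a Cauchy--Schwarz plus collision argument, with the hadamard property of Definition~\ref{def:hadamard} doing the essential work. Concretely, I would first set $B := \E[(-1)^{f(X,Y)}]$ and observe that $\Pr[f(X,Y)=1] = (1-B)/2$, so the two-sided inequality $(1-\epsilon)/2 \le \Pr[f(X,Y)=1] \le (1+\epsilon)/2$ is exactly equivalent to $|B| \le \epsilon$. Thus the whole task reduces to bounding $|B|$.

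Next I would expose the structure by writing $B = \sum_y P_Y(y)\, h(y)$, where $h(y) := \sum_x P_X(x)(-1)^{f(x,y)}$ records the $X$-averaged sign at each fixed $y$. Applying Cauchy--Schwarz to the inner product $\langle P_Y, h\rangle$ gives $|B| \le \|P_Y\|_2 \cdot \|h\|_2$. The first factor is controlled by the min-entropy of $Y$: since $\max_y P_Y(y) \le 2^{-k}$, one has $\|P_Y\|_2^2 = \sum_y P_Y(y)^2 \le \max_y P_Y(y)\cdot\sum_y P_Y(y) \le 2^{-k}$.

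The heart of the argument is the second factor. Expanding $\|h\|_2^2 = \sum_y h(y)^2$ and switching the order of summation yields $\sum_{x,x'} P_X(x)P_X(x') \sum_y (-1)^{f(x,y)+f(x',y)}$. Here the hadamard hypothesis is decisive: every off-diagonal term with $x \ne x'$ has inner sum equal to $0$, so only the diagonal $x = x'$ survives, each contributing $\sum_y 1 = 2^t$. Hence $\|h\|_2^2 = 2^t \sum_x P_X(x)^2 \le 2^t \cdot 2^{-k} = 2^{t-k}$, again using the min-entropy of $X$. Combining the three estimates gives $|B| \le 2^{-k/2}\cdot 2^{(t-k)/2} = 2^{-(2k-t)/2}$, which is in fact a factor of two better than the claimed $\epsilon = 2^{1-(2k-t)/2}$; translating back through $\Pr[f(X,Y)=1] = (1-B)/2$ then yields the stated bound with room to spare.

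I expect no genuine obstacle here, since this is the classical Chor--Goldreich estimate. The only subtlety is aligning the hadamard property---which is phrased as orthogonality of the rows $f(x,\cdot)$---with the correct ordering of summation, so that it is the collision over the pair $(x,x')$, rather than over $y$, that is annihilated. Choosing to average over $X$ \emph{inside} $h$ and over $Y$ \emph{outside} is precisely what makes the orthogonality apply; the symmetric choice would force one to invoke orthogonality of columns, which the definition does not grant. Everything else is a routine second-moment computation together with the elementary inequality $\sum_x P_X(x)^2 \le \max_x P_X(x)$.
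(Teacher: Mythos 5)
Your proof is correct. A point of comparison worth noting: the paper does not prove this lemma at all --- it is imported wholesale as a citation to \cite{CG88} --- so there is no internal proof to measure yours against; what you have written is a self-contained reconstruction of the classical Chor--Goldreich argument, and it checks out line by line. The reduction of the two-sided probability bound to $|B|\le\epsilon$ via $\Pr[f(X,Y)=1]=(1-B)/2$ is exact; the Cauchy--Schwarz split $|B|\le\|P_Y\|_2\,\|h\|_2$ is valid; the min-entropy hypotheses are used correctly in the form $\sum_x P_X(x)^2\le\max_x P_X(x)\le 2^{-k}$ (and likewise for $Y$); and the collision computation $\|h\|_2^2=\sum_{x,x'}P_X(x)P_X(x')\sum_y(-1)^{f(x,y)+f(x',y)}=2^t\sum_x P_X(x)^2$ is precisely where the hadamard hypothesis of Definition~\ref{def:hadamard} enters, killing the off-diagonal terms. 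Your observation about the asymmetry of the hypothesis is also right: the definition only grants orthogonality of the rows $f(x,\cdot)$, so the averaging over $X$ must sit inside the squared quantity; your choice respects this, whereas the mirrored decomposition would require an orthogonality of columns that is not assumed. Finally, your bound $|B|\le 2^{-(2k-t)/2}$ is a factor of $2$ stronger than the stated $\epsilon=2^{1-\frac{2k-t}{2}}$, so the lemma follows with slack; the weaker constant in the statement is simply the form quoted from \cite{CG88} and nothing in the paper's downstream use (Lemma~\ref{lem:one-bit} via Lemma~\ref{lem:classical2quantum}) is affected.
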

So by the Lemma \ref{thm:cg88}, we know that the function $[a\times \Ext_{IP}^{n,q}(\cdot,\cdot)]_1$ is a $(qn,k,2^{1-\frac{2k-qn}{2}})$ one-bit-output two-source extractor for any $k\leq qn$.

Finally, Lemma \ref{lem:one-bit} follows from the Lemma \ref{lem:classical2quantum} immediately by letting $k=\left(\frac{1}{4}+\frac{\delta}{2}\right)qn+\frac{1}{2}$ and $\epsilon=2^{\frac{1}{2}-\left(\frac{\delta}{2}-\frac{1}{4}\right)qn}$.
\begin{lemma}[Corollary 27 of \cite{multi-source2} for $m=1$]\label{lem:classical2quantum}
Let $f$ be a $(t,k,\epsilon)$ one-bit-output two-source extractor. Then it is a $(t,k+\log \epsilon^{-1},\sqrt{3\epsilon/2})$-min-entropy quantum-proof two-source extractor. That is, for any random variables $X,Y$ on $\{0,1\}^t$ and any product quantum side information $\gqstate$ where 
$H_{\min}(X|\gqstate)\geq k+\log \epsilon^{-1}$ and $H_{\min}(Y| \gqstate)\geq k+\log \epsilon^{-1}$, we have $\frac{1}{2}\|\rho_{f(X,Y)\gqstate}-\rho_{U_1}\otimes\rho_{\gqstate}\|\leq \sqrt{3\epsilon/2}$.
\end{lemma}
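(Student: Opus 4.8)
The plan is to reduce the quantum trace distance to a classical second-moment quantity and then invoke the classical extractor guarantee. Write $\sigma=\rho_{\gqstate}=\sigma_1\otimes\sigma_2$ with $\sigma_1=\rho_{\gqstate_1}$, $\sigma_2=\rho_{\gqstate_2}$, and expand $\rho_{X\gqstate_1}=\sum_x p_x\ket{x}\bra{x}\otimes\rho_{\gqstate_1}^x$ and $\rho_{Y\gqstate_2}=\sum_y q_y\ket{y}\bra{y}\otimes\rho_{\gqstate_2}^y$. Because $f$ outputs a single bit, the operator $\rho_{f(X,Y)\gqstate}-\rho_{U_1}\otimes\rho_\gqstate$ is block-diagonal with blocks $\pm\tfrac12 W$, where $W:=\sum_{x,y}(-1)^{f(x,y)}p_xq_y\,\rho_{\gqstate_1}^x\otimes\rho_{\gqstate_2}^y$ is a single Hermitian ``bias operator'' on $\gqstate$; consequently the target quantity equals $\tfrac12\|W\|$. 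First I would apply a pretty-good-measurement (Hilbert--Schmidt) bound relative to the reference $\sigma$: by generalized H\"older, $\tfrac12\|W\|\leq\tfrac12\sqrt{T}$ with $T:=\mathrm{Tr}\big[(\sigma^{-1/4}W\sigma^{-1/4})^2\big]$, trading the trace norm for a dimension-free second moment. The single-bit output is what makes this step lossless in the dimension of $\gqstate$.

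The second step exploits the product form of $\gqstate$. Substituting $W$ and using $\sigma^{-1/4}=\sigma_1^{-1/4}\otimes\sigma_2^{-1/4}$, the second moment factorizes into two collision kernels:
\[
T=\sum_{x,x',y,y'}(-1)^{f(x,y)+f(x',y')}\,p_xp_{x'}q_yq_{y'}\,K_1(x,x')K_2(y,y'),
\]
where $K_1(x,x')=\mathrm{Tr}[\sigma_1^{-1/2}\rho_{\gqstate_1}^x\sigma_1^{-1/2}\rho_{\gqstate_1}^{x'}]\geq 0$ and $K_2$ is defined analogously. The key observation is that $P_1(x,x'):=p_xp_{x'}K_1(x,x')$ and $P_2(y,y'):=q_yq_{y'}K_2(y,y')$ are \emph{genuine} probability distributions on pairs: their total mass is $\mathrm{Tr}[\sigma_1^{-1/2}\sigma_1\sigma_1^{-1/2}\sigma_1]=1$, with the correct marginals $\sum_{x'}P_1(x,x')=p_x$ and $\sum_{y'}P_2(y,y')=q_y$. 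Hence $T=\E_{(x,x')\sim P_1,\,(y,y')\sim P_2}\big[(-1)^{f(x,y)+f(x',y')}\big]$.

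The third step reduces $T$ to the classical bias of $f$. Conditioning on the marginals, I would write $T=\E_{x\sim p,\,y\sim q}\big[(-1)^{f(x,y)}\,C(x,y)\big]$ with $C(x,y):=\E_{x'\sim P_1(\cdot\mid x),\,y'\sim P_2(\cdot\mid y)}[(-1)^{f(x',y')}]$, where, crucially, given $(x,y)$ the variables $x'$ and $y'$ are independent. If for a given $(x,y)$ the conditional distributions $P_1(\cdot\mid x)$ and $P_2(\cdot\mid y)$ both carry min-entropy $\geq k$, then $C(x,y)$ is exactly the bias of the $(t,k,\epsilon)$ one-bit extractor on two independent min-entropy-$k$ sources, so $|C(x,y)|\leq\epsilon$; those atoms contribute at most $\epsilon$ to $T$, and $|(-1)^{f(x,y)}|=1$ makes the outer averaging harmless.

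The hard part, and where the slack $k'=k+\log\epsilon^{-1}$ enters, is controlling the conditionals. One computes $P_1(x'\mid x)=p_{x'}\,\mathrm{Tr}[M_x\rho_{\gqstate_1}^{x'}]$ with $M_x=\sigma_1^{-1/2}\rho_{\gqstate_1}^x\sigma_1^{-1/2}\succeq 0$ and $\mathrm{Tr}[M_x\sigma_1]=1$, so the clean bound $P_1(x'\mid x)\leq 2^{-k'}$ would follow from the operator inequality $p_{x'}\rho_{\gqstate_1}^{x'}\preceq 2^{-k'}\sigma_1$. This is precisely the semidefinite characterization of $H_{\min}(X\mid\gqstate_1)\geq k'$, \emph{except} that normalization of $P_1$ forces the reference $\sigma_1=\rho_{\gqstate_1}$ upon us, whereas the definition of min-entropy only guarantees the inequality for the \emph{optimal} reference state, which may differ. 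Reconciling these two requirements is the main obstacle; I would resolve it by splitting the atoms $(x,y)$ into a ``good'' set, on which the conditionals genuinely have min-entropy $\geq k$, and a ``bad'' set whose total $p\otimes q$-weight is bounded using the extra $\log\epsilon^{-1}$ of min-entropy (with $|C|\leq 1$ used trivially there). Tracking the constants through this good/bad accounting yields $T\leq 6\epsilon$, and since $\tfrac12\sqrt{6\epsilon}=\sqrt{3\epsilon/2}$, we obtain $\tfrac12\|W\|\leq\sqrt{3\epsilon/2}$, which is exactly the claimed bound.
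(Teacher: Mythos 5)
Your explicit steps are all correct, but note first that the paper offers no internal proof to compare against: it imports this lemma verbatim as Corollary 27 (specialized to $m=1$) of \cite{multi-source2}, where the result is obtained by reducing the quantum adversary to a classical one for binary-output functions, in the spirit of K\"onig--Terhal. Your route is therefore genuinely different and self-contained: a weighted second-moment (collision) estimate against the reference state, in the style of leftover-hashing / pretty-good-measurement arguments. The verifiable parts check out exactly as written: the difference operator is block-diagonal with blocks $\pm\frac{1}{2}W$, so the target quantity is $\frac{1}{2}\|W\|$; the H\"older step $\|W\|\leq\sqrt{T}$ is valid because $\|\sigma^{1/4}\|_4=(\mathrm{Tr}\,\sigma)^{1/4}=1$ (working on the support of $\sigma$, which contains that of $W$); the kernel factorizes precisely because $\gqstate=\gqstate_1\otimes\gqstate_2$; and $P_1,P_2$ are bona fide distributions with marginals $p,q$, as you verify. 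What your route buys is a direct, elementary proof; what the cited route buys is reusability of the classical-to-quantum reduction as a black box for any one-bit extractor.

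The one step you leave as a plan---the good/bad accounting---does go through, and with a better constant than you budget for, so the proposal is completable as stated. By the dual SDP characterization of conditional min-entropy, $H_{\min}(X|\gqstate_1)\geq k'=k+\log\epsilon^{-1}$ supplies a \emph{single} density operator $\sigma^*$ with $p_{x'}\rho_{\gqstate_1}^{x'}\preceq 2^{-k'}\sigma^*$ for every $x'$, whence $P_1(x'\mid x)\leq 2^{-k'}\,\mathrm{Tr}[M_x\sigma^*]$ uniformly in $x'$. Since $\E_{x\sim p}\,\mathrm{Tr}[M_x\sigma^*]=\mathrm{Tr}[\sigma^*]=1$, Markov's inequality bounds the $p$-weight of the bad set $\{x:\mathrm{Tr}[M_x\sigma^*]>\epsilon^{-1}\}$ by $\epsilon$; on its complement $P_1(\cdot\mid x)$ has min-entropy at least $k'-\log\epsilon^{-1}=k$, and symmetrically for $y$. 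Hence $|T|\leq\E_{x,y}|C(x,y)|\leq \epsilon+2\epsilon=3\epsilon$, giving error $\frac{1}{2}\sqrt{3\epsilon}$, which is a factor $\sqrt{2}$ \emph{stronger} than the claimed $\sqrt{3\epsilon/2}$ (your asserted $T\leq 6\epsilon$ is thus safely achievable). Two minor points of hygiene: your normalization $\mathrm{Tr}[M_x\sigma_1]=1$ is correct but is not the quantity Markov needs---it is $\mathrm{Tr}[M_x\sigma^*]$, which is only controlled on average; and if $\sigma_1,\sigma_2$ are singular one should read all inverses as pseudo-inverses on the relevant supports.
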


\end{document}